\documentclass[reqno]{amsproc}
\usepackage{amssymb}
\usepackage{amsmath}
\usepackage{amsfonts}
\usepackage{color}
\usepackage{geometry}
\usepackage{bbm}
\usepackage{stmaryrd}
\usepackage{cite}
\usepackage{tikz}

\definecolor{myurlcolor}{rgb}{0,0,0.4}
\definecolor{mycitecolor}{rgb}{0,0.5,0}
\definecolor{myrefcolor}{rgb}{0.5,0,0}
\usepackage[pagebackref,draft=false]{hyperref}
\hypersetup{colorlinks,
linkcolor=myrefcolor,
citecolor=mycitecolor,
urlcolor=myurlcolor}

\usepackage[capitalize]{cleveref}

\newcommand{\beq}{\begin{equation}}
\newcommand{\eeq}{\end{equation}}
\newcommand{\Z}{\mathbb{Z}}
\newcommand{\N}{\mathbb{N}}

\newcommand{\C}{\mathbb{C}}
\renewcommand{\H}{\mathcal{H}}
\newcommand{\B}{\mathcal{B}}
\newcommand{\U}{\mathcal{U}}
\newcommand{\Cl}{\mathcal{C}}
\newcommand{\defin}{:=}
\newcommand{\myand}{\mathrel{\mathsf{and}}}
\newcommand{\myimplies}{\:\mathrel{\mathsf{implies}}\:}
\DeclareMathOperator*{\mybigand}{\mathsf{And}\,}
\newcommand{\PU}[1]{OC(#1)}

\theoremstyle{plain}
\newtheorem{dummy}{Dummy}
\newtheorem{thm}[dummy]{Theorem}
\newtheorem*{thm*}{Theorem}
\newtheorem{lem}[dummy]{Lemma}

\newtheorem{cor}[dummy]{Corollary}
\newtheorem*{cor*}{Corollary}

\newtheorem{defn}[dummy]{Definition}
\newtheorem{fact}[dummy]{Fact}

\theoremstyle{remark}
\newtheorem{ex}[dummy]{Example}
\newtheorem{rem}[dummy]{Remark}


\allowdisplaybreaks




\begin{document}
\sloppy

\setlength{\jot}{6pt}



\title{Quantum logic is undecidable}

\author{Tobias Fritz}
\email{tfritz@pitp.ca}
\address{Perimeter Institute for Theoretical Physics, Waterloo, Canada}

\keywords{Quantum logic, orthomodular lattices, Hilbert lattices; decidability, first-order theory, restricted word problem; finitely presented C*-algebra, residually finite-dimensional; quantum contextuality.}

\subjclass[2010]{Primary: 03G12, 03B25, 46L99; Secondary: 81P13.}

\begin{abstract}
We investigate the first-order theory of closed subspaces of complex Hilbert spaces in the signature $(\lor,\perp,0,1)$, where `$\perp$' is the orthogonality relation. Our main result is that already its quasi-identities are undecidable: there is no algorithm to decide whether an implication between equations and orthogonality relations implies another equation. This is a corollary of a recent result of Slofstra in combinatorial group theory. It follows upon reinterpreting that result in terms of the hypergraph approach to quantum contextuality, for which it constitutes a proof of the \emph{inverse sandwich conjecture}. It can also be interpreted as stating that a certain quantum satisfiability problem is undecidable.
\end{abstract}

\maketitle

\section{Introduction}

Quantum logic starts with the idea that quantum theory can be understood as a theory of physics in which standard Boolean logic gets replaced by a different form of logic, where various rules, such as the distributivity of logical and over logical or, are relaxed~\cite{logicqm,handbook}. This builds on the observation that $\{0,1\}$-valued observables behave like logical propositions: such an observable is a projection operator on Hilbert space, and it can be identified with the closed subspace that it projects onto. In this way, the conjunction (logical \emph{and}) translates into the intersection of subspaces, while disjunction (logical \emph{or}) is interpreted as forming the closed subspace spanned by two subspaces. Hence the closed subspaces of a complex Hilbert space $\H$ form the \emph{complex Hilbert lattice} $\Cl(\H)$, which is interpreted as the lattice of `quantum propositions' and forms a particular kind of orthomodular lattice~\cite{orthomodular,meashilb,ql}.

However, the theory of orthomodular lattices is quite rich and contains many objects other than complex Hilbert lattices. So in order to understand the laws of quantum logic, one has to find additional properties which characterize the latter kind of objects. Much effort has been devoted to this question, resulting in partial characterizations such as Piron's theorem~\cite{piron,pironfix}, Wilbur's theorem~\cite{wilbur} and Sol\`er's theorem~\cite{soler}\footnote{See also the survey~\cite{hilbgeom} for a more recent exposition from a geometrical perspective.}. However, the axioms for complex Hilbert lattices that these results suggest are quite sophisticated: atomicity, completeness or the existence of an infinite orthonormal sequence. These are conditions that cannot be expressed algebraically, i.e.~as \emph{first-order} properties using just a finite number of variables, algebraic operations, and quantifiers. Fortunately, there has also been a substantial amount of work on first-order properties enjoyed by complex Hilbert lattices, and in particular on equational laws that hold in all complex Hilbert lattices $\Cl(\H)$, such as the algorithmic approach advocated by Megill and Pavi{\v{c}}i\'c~\cite{hleqs,qlexpl}. Such algorithmic approaches are what our present contribution is about: we prove that there cannot exist any algorithm to decide whether an \emph{implication} between equations in complex Hilbert lattices holds for all variable assignments.

To make this statement precise, we keep the lattice operations notationally separate from the external logical connectives and denote the latter in plain English. We consider the theory of Hilbert lattices in the signature $(\lor,\perp,0,1)$, where is the binary lattice join operation, $\perp$ is the binary orthogonality relation, and $0$ and $1$ are constants denoting the bottom and top lattice elements respectively, corresponding to the zero subspace and the full subspace.

The following result is an immediate consequence of our \Cref{sandwich} presented in \Cref{mainthm}.

\begin{thm}
\label{qlundec}
For a complex Hilbert space $\H$, let $\Cl(\H)$ denote the lattice of projections on $\H$ or equivalently closed subspaces of $\H$. Let $0,1\in\Cl(\H)$ stand for the zero subspace and full subspace, respectively. Let $P_1,P_2,\ldots$ be free variables taking values in $\Cl(\H)$.

Then there is no algorithm to decide whether a sentence of the form
\beq
\label{Eprop}
	\forall P_1,\ldots, P_n \,\left[ \left( E_1 \myand E_2 \myand \ldots \myand E_k \right) \myimplies (0 = 1)\right]
\eeq
holds in $\Cl(\H)$ for every $\H$, where each $E_i$ is a formula having one of the following two forms:
\begin{itemize}
	\item an equation of the form $P_{i_1}\lor\ldots\lor P_{i_k} = 1$;
	\item an orthogonality relation $P_{i_1}\perp P_{i_2}$ between two free variables.
\end{itemize}
\end{thm}

Here, the consequent $0=1$ states that the zero subspace is equal to the whole Hilbert space, or equivalently that $\H$ is the trivial zero-dimensional Hilbert space. In other words, the sentence~\eqref{Eprop} states that the antecedents $E_1,\ldots,E_k$ are jointly contradictory in any nonzero Hilbert space. Note that all sentences of the form~\eqref{Eprop} are \emph{quasi-identities}~\cite[p.~149]{malcev}.

\begin{rem}
	It may also be interesting to work with orthocomplementation $P \mapsto P^\perp$ as a unary operation instead of the binary relation $\perp$, as usually done in the theory of orthomodular lattices. In this signature, our result is the same, with $P \perp Q$ replaced by $P \land Q^\perp = P$, which is semantically equivalent in all $\Cl(\H)$.
	
	In this new signature, we could also take orthocomplements everywhere, replacing each $P_i$ by $P_i^\perp$. In this way, it follows that Theorem~\ref{qlundec} remains true if one replaces $\lor$ by $\land$ and $1$ by $0$, so that each $E_i$ is either of the form $P_{i_1}\land\ldots\land P_{i_k} = 0$ or $P_{i_1} \land P_{i_2}^\perp = P_{i_1}$.
\end{rem}

\begin{rem}
Instead of asking whether~\eqref{Eprop} holds in all $\H$, one can alternatively negate the question and ask whether there exists a Hilbert space $\H$ with $\dim(\H) > 0$ together with an assignment of projections in $\H$ to the free variables such that the formula
\[
	E_1 \myand E_2 \myand \ldots \myand E_k
\]
holds. This formulation makes it clear that we are dealing with a quantum version of the Boolean satisfiability problem---distinct from the QSAT problem introduced by Bravyi~\cite{QSAT}---which is undecidable as per Theorem~\ref{qlundec}. As we will see in the proof of Lemma~\ref{allH}, if an instance of our quantum satisfiability problem is solvable, then it is also solvable with $\H$ infinite-dimensional separable. Thus it is sufficient to consider e.g.~$\H = \ell^2(\N)$ only.

The reason that we prefer the statement of Theorem~\ref{qlundec} over the satisfiability formulation is that we are interested in the laws of quantum logic, i.e.~in those statements that hold in \emph{all} Hilbert spaces $\H$.
\end{rem}

\begin{ex}
\label{triangle}
The implication
\begin{align*}
(P\lor Q = 1) \myand {} & (Q\lor R = 1) \myand (R\lor P = 1) \\
& \myand (P\perp Q) \myand (Q\perp R) \myand (R\perp P) \myimplies (0=1)
\end{align*}
is valid: in any nonzero Hilbert space, it is impossible to have three projections that are pairwise orthogonal and such that any two of them sum to the identity~\cite{spparable}.
\end{ex}

The key ingredient that leads to Theorem~\ref{qlundec} is an undecidability result of Slofstra~\cite{tp}, who builds on earlier work of Cleve, Liu and Slofstra~\cite{solgroup} and Cleve and Mittal~\cite{bcsg}. Our contribution merely consists of having seen the connection to quantum logic via the hypergraph approach to contextuality~\cite{AFLS}. The mathematical depth necessary for deriving such an undecidability result is to be found in Slofstra's arguments.

The statement that we actually prove first is Corollary~\ref{isc}, which is the \emph{inverse sandwich conjecture} from~\cite{AFLS}. The undecidability of quantum logic in the form of Theorem~\ref{sandwich} is then merely a reformulation---on an even smaller set of sentences than our formulation above. As we will see, the statement remains true if one replaces `holds in every $\Cl(\H)$' by `holds in $\Cl(\H)$ for some infinite-dimensional separable Hilbert space $\H$'. 

Corollary~\ref{isc} also implies that infinitely many of the free hypergraph C*-algebras $C^*(H)$ of~\cite{AFLS} fail to be residually finite-dimensional, as per Corollary~\ref{notrfd}.

\subsection*{Related work} 

Lipshitz~\cite{modunsolv} has shown, among other things, that the purely implicational fragment of the theory of all $\Cl(\C^n)$ is undecidable, already in the signature $(\lor,\land,0,1)$. While this result is similar to ours, it uses techniques specific to a finite-dimensional setting, namely coordinatization. 
A result of Sherif~\cite{orthodec} is that any first-order theory between orthomodular lattices and finite orthomodular lattices is undecidable.
Herrmann~\cite{vna} has proven that the equational theory of the orthomodular lattice of projections of a finite von Neumann algebra factor is decidable; this includes both the $\Cl(\C^n)$ and the projection ortholattices of factors of type $\mathrm{II}_1$. Other work of Herrmann and Ziegler is also concerned with related decidability and complexity problems~\cite{quantumsat}.

After the preprint version of this paper appeared, Atserias, Kolaitis and Severini~\cite{gensat} investigated general classes of satisfiability problems with variables in the projection lattice of Hilbert space. Based on Slofstra's results together with the developments of this paper, they proved a sharp separation result analogous to Schaefer's classical dichotomy theorem on Boolean satisfiability.

Based on the present results, we have investigated further undecidable properties of free hypergraph C*-algebras in the paper~\cite{freehyper}, which is a follow-up to the present one.

\section{Solution groups and their group C*-algebras}

Before getting to the proof of our Theorem~\ref{qlundec}, we review the essential ingredient: Slofstra's recent work in combinatorial group theory~\cite{tp}. Subsequently, we will modify his intended interpretation using nonlocal games to one in terms of contextuality. From there, it is only a small step to quantum logic.

Following~\cite{bcsg}, Slofstra considers \emph{linear systems} over $\Z_2$, which are linear equations $Mx = b$ with $M\in\Z_2^{m\times n}$ and $b\in\Z_2^m$. While conventional solutions have $x\in\Z_2^n$, a \emph{quantum solution}~\cite{solgroup} consists of self-adjoint operators $A_1,\ldots,A_n\in\B(\H)$ such that:
\begin{itemize}
\item $A_i^2 = \mathbbm{1}$ for all $i$;
\item If $x_i$ and $x_j$ appear in the same equation, then $A_i$ commutes with $A_j$;
\item For each equation of the form $x_{k_1} + \ldots + x_{k_r} = b_r$, the operators satisfy
\[
	A_{k_1} \cdots A_{k_r} = (-1)^{b_r}\mathbbm{1},
\]
where the order of the factors is irrelevant due to the previous commutativity requirement.
\end{itemize}

The fact that quantum solutions solve the given linear system multiplicatively instead of additively is purely conventional, and allows for simpler notation. The most famous example of a quantum solution of a linear system that is unsolvable over $\Z_2$ is the Mermin-Peres magic square~\cite{peres}. The quantum solution for $\H = \C$ are precisely the conventional solutions over $\Z$, written multiplicatively as $A_i = (-1)^{x_i}$.

The quantum solutions of a linear system are controlled by representations of a certain group associated to the system:

\begin{defn}[Cleve, Liu, Slofstra~\cite{solgroup}]
Let $Mx = b$ be a linear system over $\Z_2$. Its \emph{solution group} is the finitely presented group $\Gamma$ with generators $g_1,\ldots,g_n$ and $J$ subject to the relations:
\begin{itemize}
\item $g_i^2 = 1$ for all $i$;
\item If $x_i$ and $x_j$ appear in the same equation, then $g_i g_j = g_j g_i$;
\item For each equation of the form $x_{k_1} + \ldots + x_{k_r} = b_r$, the generators satisfy
\[
	g_{k_1} \cdots g_{k_r} = J^{b_r}.
\]
\end{itemize}
\label{solgroupdef}
\end{defn}

These relations are precisely such that quantum solutions of $Mx=b$ on a Hilbert space $\H$ are in bijective correspondence with those unitary representations $\pi : \Gamma \to \U(\H)$ for which $\pi(J) = -\mathbbm{1}$.

Slofstra~\cite[Theorem~3.1]{tp} has shown that every finitely presented group embeds into a solution group in a particular way. Concerning undecidability, the following essential result was derived in the proof of~\cite[Corollary~3.3]{tp}.

\begin{thm}[Slofstra]
\label{solgroupundec}
Given a linear system $Mx = b$, it is undecidable to determine whether $J = 1$ in the associated solution group. Equivalently, it is undecidable to determine whether the linear system has a quantum solution.
\end{thm}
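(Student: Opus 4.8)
The plan is to reduce a known undecidable problem to the question ``$J = 1$?'' by means of the embedding theorem (Slofstra's Theorem~3.1, cited just above), and then to deduce the ``equivalently'' clause from elementary representation theory. First I would invoke the existence of finitely presented groups with unsolvable word problem (Novikov--Boone; alternatively one may use the undecidability of group triviality, Adian--Rabin). The embedding theorem attaches to each presentation a linear system over $\Z_2$ whose solution group $\Gamma$ contains the given group in a controlled way; the feature I would extract is that this assignment is computable and arranged so that $J = 1$ in $\Gamma$ holds exactly on the positive instances of the chosen undecidable problem, e.g.\ exactly when a distinguished word is trivial in the group. Granting this, any algorithm deciding ``$J = 1$?'' would decide the source problem, which is impossible; hence ``$J = 1$?'' is undecidable.

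For the second assertion I would show that the linear system admits a quantum solution on some nonzero Hilbert space if and only if $J \neq 1$ in $\Gamma$, so that the two decision problems are complements of one another. Here I use the bijection recalled above between quantum solutions and unitary representations $\pi : \Gamma \to \U(\H)$ with $\pi(J) = -\mathbbm{1}$. One direction is immediate: such a $\pi$ on a nonzero $\H$ sends $J$ and $1$ to the distinct operators $-\mathbbm{1}$ and $\mathbbm{1}$, forcing $J \neq 1$. For the converse I would use that $J$ is central of order two, so that in the left regular representation $\lambda$ on $\ell^2(\Gamma)$ the involution $\lambda(J)$ commutes with every $\lambda(g)$ and, since $J \neq 1$, pairs the basis vectors by $\delta_x \leftrightarrow \delta_{Jx}$ without fixed points. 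Its $(-1)$-eigenspace is then nonzero and $\Gamma$-invariant, and restricting $\lambda$ to it produces a representation $\pi$ on a nonzero Hilbert space with $\pi(J) = -\mathbbm{1}$, i.e.\ a quantum solution. Since a decision problem is undecidable exactly when its complement is, the undecidability of ``$J = 1$?'' is equivalent to that of ``has a quantum solution?''.

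The entire difficulty sits in the embedding theorem, which is the deep part of Slofstra's work: one must encode an undecidable group-theoretic problem into the rigid format of solution groups while keeping exact control over the distinguished central element $J$. Assuming that theorem, the remaining work is routine---the reduction is a matter of computable bookkeeping, and the equivalence reduces to the regular-representation argument above, which needs only that $J$ is central of order two. Accordingly I would present the first part as a direct appeal to Theorem~3.1 together with Novikov--Boone (or Adian--Rabin), and devote the written proof mainly to making the ``equivalently'' clause precise.
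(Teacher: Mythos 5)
The paper offers no proof of this statement at all: it is imported verbatim from Slofstra, with the remark that it ``was derived in the proof of~\cite[Corollary~3.3]{tp}.'' So there is nothing internal to compare against; the right benchmark is Slofstra's own argument, and your outline does mirror it --- reduce a classical undecidable problem (word problem / triviality) to ``$J=1$?'' via the computable embedding of Theorem~3.1, then observe that ``$J=1$?'' and ``has a quantum solution?'' are complementary problems. Your second paragraph is the genuinely valuable addition: the paper asserts the ``equivalently'' clause without justification, and your regular-representation argument for it is correct and complete. Indeed, if $J\neq 1$ then $\lambda(J)$ on $\ell^2(\Gamma)$ is a self-adjoint involution commuting with $\lambda(\Gamma)$ whose $(-1)$-eigenspace contains the nonzero vectors $\delta_x-\delta_{Jx}$; restricting to that invariant subspace gives $\pi(J)=-\mathbbm{1}$ on a nonzero Hilbert space, hence a quantum solution, and the converse direction is trivial. (Note this uses that $J$ is central of order two --- relations that are part of the standard definition of the solution group in~\cite{solgroup}, though the paper's Definition~\ref{solgroupdef} happens to elide them.)

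One point deserves flagging as a gloss rather than a gap. Theorem~3.1 as stated --- every finitely presented group (over $\Z_2$) embeds computably into a solution group --- does not by itself ``arrange'' that $J=1$ exactly on the positive instances of a word-problem instance $(H,w)$. What the embedding gives you is preservation of $J$-triviality: since the embedding $\varphi$ is injective and sends the distinguished element to $J$, one has $J=1$ in the solution group iff the distinguished element is trivial in the embedded group. You still need the intermediate, also computable and not entirely routine, step of encoding $(H,w)$ as a finitely presented group over $\Z_2$ whose distinguished central involution is trivial iff $w=1$ in $H$; quotienting naively by relations such as ``$J=w$, $J$ central, $J^2=1$'' can collapse more than intended, so this encoding requires care (free-product/HNN-style tricks). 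That step lives in the proof of Slofstra's Corollary~3.3, not in Theorem~3.1, so your attribution of ``the entire difficulty'' to the embedding theorem alone is slightly off --- but since you explicitly grant this feature as the imported black box, the logical structure of your reduction is sound.
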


We now move on to considering the ramifications of this result, first for the hypergraph approach to contextuality~\cite{AFLS} and then for quantum logic, including the proof of Theorem~\ref{qlundec}.

\section{Consequences for the hypergraph approach to contextuality}

For us, a \emph{hypergraph} is a pair $H=(V,E)$ consisting of a finite set of vertices $V$ and a subset $E\subseteq 2^V$ with $\cup E = V$. Their relevance lies in the observation that hypergraphs provide a convenient and powerful language to analyze quantum contextuality~\cite{AFLS}:

\begin{defn}
A \emph{quantum representation} of a hypergraph $H=(V,E)$ consists of:
\begin{itemize}
\item A Hilbert space $\H$ with $\dim(\H)>0$,
\item a family of projections $(P_v)_{v\in V}$ in $\H$ assigned to the vertices of $H$ such that for each edge $e\in E$, the associated projections form a partition of unity,
\beq
\label{pueq}
\sum_{v\in e} P_v = \mathbbm{1}.
\eeq
\end{itemize}
\end{defn}

In the case where all projections have rank $1$, this is related to the notion of Kochen-Specker configuration: a finite collection of vectors in a Hilbert space such that certain particular subsets of these vectors form orthonormal bases. The concept of \emph{quantum model}~\cite[Definition~5.1.1]{AFLS} on a hypergraph---considered as a contextuality scenario---is implicitly based on our notion of quantum representation. In general, the idea is that the hyperedges $e\in E$ label measurements with outcomes $v\in V$, and some outcomes may be shared between several measurements, corresponding to vertices being incident to several hyperedges. 

\begin{ex}
A quantum representation of the hypergraph
\begin{center}
\begin{tikzpicture}[scale=.85]
\node[draw,shape=circle,fill,scale=.5] (a) at (90:1.4) {} ;
\node[draw,shape=circle,fill,scale=.5] (c) at (210:1.4) {} ;
\node[draw,shape=circle,fill,scale=.5] (e) at (330:1.4) {} ;
\node[below of=a,node distance=3mm] {};
\node[above right of=c,node distance=3mm] {};
\node[above left of=e,node distance=3mm] {};
\draw[thick,blue,rotate=270] (0:.6) ellipse (.4cm and 1.8cm) ;
\draw[thick,blue,rotate=150] (0:.65) ellipse (.4cm and 1.8cm) ;
\draw[thick,blue,rotate=30] (0:.65) ellipse (.4cm and 1.8cm) ;
\end{tikzpicture}
\end{center}
would yield a nontrivial solution to the antecedents of Example~\ref{triangle}. Thus this hypergraph does not have any quantum representation.
\end{ex}

The quantum representations of a hypergraph are equivalently given by the representations of the \emph{free hypergraph C*-algebra}, which is the finitely presented C*-algebra
\[
	C^*(H) \defin \left\langle P_v :\: v\in V\:\bigg|\: P_v = P_v^* = P_v^2,\;\:\sum_{v\in e} P_v = \mathbbm{1} \:\:\forall e\in E\right\rangle,
\]
as already introduced in~\cite[Section~8.3]{AFLS}.
We quickly record a standard observation for future reference:

\begin{fact}
\label{pumo}
Projections that form a partition of unity~\eqref{pueq} are mutually orthogonal, $P_v P_w = \delta_{v,w} P_v$.
\end{fact}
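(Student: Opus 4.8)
The plan is to deduce the orthogonality from the partition-of-unity relation~\eqref{pueq} together with the defining properties $P_v = P_v^* = P_v^2$ of projections, letting positivity do the essential work. The diagonal case $v=w$ is just idempotence, $P_wP_w=P_w$, so I fix two distinct indices $v\neq w$ in the index set of~\eqref{pueq} and aim to show $P_vP_w=0$.

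First I would record the purely algebraic consequence of~\eqref{pueq}: multiplying $\sum_{u}P_u=\mathbbm{1}$ on the left by $P_w$ and using $P_w^2=P_w$ yields $\sum_{u\neq w}P_wP_u=0$. By itself this does not force the individual summands to vanish, and this is precisely the point where positivity must enter: for arbitrary operators a sum of products can cancel without each term being zero.

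The cleanest way to finish is a vector-state computation. Let $\psi\in\H$ lie in the range of $P_w$, so that $P_w\psi=\psi$. Using $P_u^*P_u=P_u$ and~\eqref{pueq} I would compute
\[
\sum_{u}\lVert P_u\psi\rVert^2=\sum_{u}\langle\psi,P_u\psi\rangle=\Big\langle\psi,\sum_{u}P_u\,\psi\Big\rangle=\lVert\psi\rVert^2 .
\]
Since the single term $u=w$ already contributes $\lVert P_w\psi\rVert^2=\lVert\psi\rVert^2$ and every summand is nonnegative, all remaining terms vanish, i.e.\ $P_u\psi=0$ for $u\neq w$. As $\psi$ ranges over the range of $P_w$, this says exactly $P_uP_w=0$ for $u\neq w$, which is the claim.

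Alternatively, one can argue entirely inside the C*-algebra $C^*(H)$: the element $\mathbbm{1}-P_w=\sum_{u\neq w}P_u$ is again a projection, and each $P_v$ with $v\neq w$ satisfies $P_v\leq\mathbbm{1}-P_w$ because the difference $\sum_{u\neq v,w}P_u$ is a sum of projections, hence positive. Invoking the standard fact that a subprojection $P'\leq R$ of a projection $R$ obeys $RP'=P'$ (provable by observing that $(\mathbbm{1}-R)P'(\mathbbm{1}-R)$ equals $AA^*$ for $A=(\mathbbm{1}-R)P'$ while being squeezed between $0$ and $(\mathbbm{1}-R)R(\mathbbm{1}-R)=0$, forcing $A=0$), applied to $R=\mathbbm{1}-P_w$, gives $(\mathbbm{1}-P_w)P_v=P_v$, i.e.\ $P_wP_v=0$. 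The only real obstacle is conceptual rather than technical: one must resist reading off orthogonality from the cancelled sum alone, and instead feed in positivity, after which either route is entirely routine.
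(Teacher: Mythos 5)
Your proof is correct. Note that the paper itself offers no proof of Fact~\ref{pumo} at all---it is recorded as ``a standard observation''---so there is no official argument to compare against; either of your two routes would serve as the omitted justification. Your first argument (evaluate $\sum_u \lVert P_u\psi\rVert^2$ on a vector $\psi$ in the range of $P_w$ and use nonnegativity of each summand) is the canonical one, and your second, order-theoretic argument via $P_v \leq \mathbbm{1}-P_w$ and the subprojection lemma is equally valid. Your emphasis that positivity is indispensable---that the algebraic cancellation $\sum_{u\neq w}P_wP_u=0$ alone proves nothing---is exactly the right conceptual point, and it is in fact echoed by the paper itself: Remark~\ref{puralg} notes that at the purely $*$-algebraic level the orthogonality relations must be imposed separately, and the open Problem about the algebras $\Q[H]$ (idempotents without self-adjointness) exists precisely because orthogonality is not a formal consequence of the partition-of-unity relations in that setting.
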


Our central new observation is this:

\begin{lem}
There is an algorithm to compute, for every linear system $Mx = b$, a hypergraph $H$ such that quantum solutions of the linear system are in bijective correspondence with quantum representations of the hypergraph.
\label{algiso}
\end{lem}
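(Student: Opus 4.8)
The plan is to translate the algebraic data of a quantum solution into the geometric data of a quantum representation and back, using the elementary dictionary between self-adjoint unitaries and two-outcome projective measurements. Given a self-adjoint operator $A$ with $A^2 = \mathbbm{1}$, the operators $\frac{1}{2}(\mathbbm{1} + A)$ and $\frac{1}{2}(\mathbbm{1} - A)$ are complementary projections, and conversely any pair of complementary projections $P, P^\perp$ defines a self-adjoint unitary $P - P^\perp$. Under this dictionary the commutativity and product relations defining a quantum solution should turn into partition-of-unity relations, which is exactly the kind of data a hypergraph encodes; the task is to lay out the vertices and edges so that nothing more and nothing less is imposed.

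Concretely, I would build $H = (V,E)$ as follows. For every variable $i$ introduce two \emph{variable vertices} $(i,0)$ and $(i,1)$ joined by a \emph{variable edge} $\{(i,0),(i,1)\}$; in a quantum representation these receive complementary projections, from which I set $A_i \defin P_{(i,0)} - P_{(i,1)}$, automatically a self-adjoint unitary. For every equation $\ell$ with variable set $S_\ell$ introduce an \emph{equation vertex} $(\ell,a)$ for each assignment $a\in\Z_2^{S_\ell}$ satisfying $\ell$, together with an \emph{equation edge} $e_\ell \defin \{(\ell,a) : a \text{ satisfies } \ell\}$. Finally, to glue the two kinds of contexts, for every equation $\ell$ and every $i\in S_\ell$ introduce a \emph{consistency edge} $\{(\ell,a) : a_i = 0\}\cup\{(i,1)\}$. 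This $H$ is finite and manifestly computable from $M$ and $b$, and every vertex lies in an edge.

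I would then verify the correspondence in two directions. Starting from a quantum solution $(A_i)$, set $P_{(i,c)} \defin \frac{1}{2}(\mathbbm{1} + (-1)^c A_i)$ and $P_{(\ell,a)} \defin \prod_{i\in S_\ell}\frac{1}{2}(\mathbbm{1} + (-1)^{a_i} A_i)$; here the commutativity relations of a quantum solution are what guarantee that the latter products are genuine projections, while the equation relation $A_{k_1}\cdots A_{k_r} = (-1)^{b_\ell}\mathbbm{1}$ forces $P_{(\ell,a)} = 0$ for non-satisfying $a$, and a short calculation shows that all three families of edges become partitions of unity. Conversely, from a quantum representation I read off $A_i \defin P_{(i,0)} - P_{(i,1)}$: the variable edges give $A_i^2 = \mathbbm{1}$, and the equation and consistency edges give $\sum_{a:a_i=0}P_{(\ell,a)} = P_{(i,0)}$, so that $A_i = \sum_a (-1)^{a_i}P_{(\ell,a)}$ is a function of the single commuting family $\{P_{(\ell,a)}\}_a$; this yields commutativity within each equation for free, and the product relation by expanding $\prod_{i\in S_\ell}A_i = \sum_a (-1)^{\sum_i a_i}P_{(\ell,a)} = (-1)^{b_\ell}\mathbbm{1}$. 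A direct computation with the spectral calculus shows these two assignments are mutually inverse.

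The part requiring the most care is the \emph{consistency}, and I expect it to be the main obstacle. A variable $i$ typically occurs in several equations, and nothing in the equation edges alone forces the marginal $\sum_{a:a_i=0}P_{(\ell,a)}$ computed inside equation $\ell$ to agree with the one computed inside another equation $\ell'$; without such agreement the operator $A_i$ is not even well defined from the representation. This is exactly what the consistency edges are designed to fix: the partition-of-unity relation for $\{(\ell,a):a_i=0\}\cup\{(i,1)\}$ reads $\sum_{a:a_i=0}P_{(\ell,a)} + P_{(i,1)} = \mathbbm{1}$, which combined with $P_{(i,0)}+P_{(i,1)} = \mathbbm{1}$ pins the marginal down to the single projection $P_{(i,0)}$, independently of $\ell$. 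Checking that these consistency edges are themselves partitions of unity in the forward direction, and that they force precisely the agreement needed for $A_i$ (and impose no further relations), is the crux; everything else is routine bookkeeping with commuting self-adjoint unitaries.
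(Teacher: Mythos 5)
Your proposal is correct and takes essentially the same route as the paper: the same three-tier hypergraph (variable vertices with complementary-pair edges, parity-correct equation vertices with equation edges, and measurement-protocol-style consistency edges gluing the two), the same dictionary $A \mapsto \frac{1}{2}(\mathbbm{1}\pm A)$ in both directions, and the same verification steps, including vanishing of $P_{(\ell,a)}$ on non-satisfying assignments and mutual orthogonality within each edge. The only cosmetic difference is that you include one consistency edge per (equation, variable) pair where the paper includes two (one for each sign $\alpha$), which is harmless since the omitted edge's partition-of-unity relation follows from the variable edge, the equation edge, and the consistency edge you do include.
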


This observation should not be surprising, since the original considerations around linear systems and solution groups~\cite{bcsg,solgroup} were inspired by the Mermin-Peres \emph{magic square}, one of the most startling examples of quantum contextuality~\cite{mermin}. The hypergraph construction in the following proof is along the lines of the \emph{measurement protocols} of~\cite[Appendix~D]{AFLS}, combined with forming an \emph{induced subscenario}~\cite[Definition~2.5.1]{AFLS}.

\begin{proof}
Let a linear system $Mx = b$ be given, with $M\in\Z_2^{m\times n}$ and $b\in\Z_2^m$. We write $[n]\defin \{1,\ldots,n\}$, and
\[
	N(r) = \left\{\: i \in [n]\;|\: M_{r,i} = 1 \:\right\}
\]
for the set of variables that are contained in the $r$-th equation, with $r\in[m]$. We think of $i\in[n]$ as indexing an observable $A_i$ with values in $\mathbf{2}\defin\{-1,+1\}$, and each $N(r)$ as indexing a set of measurements that commute and is therefore jointly implementable. Correspondingly, our hypergraph contains two kinds of outcome-representing vertices,
\[
	V \defin \{\: v_i^\alpha\: :\: i\in [n],\: \alpha\in\mathbf{2} \:\} \:\cup\: \{\: w_r^\beta \: :\: r\in [m],\: \beta \in \mathbf{2}^{N(r)}_{\pm} \:\},
\]
where the set $\mathbf{2}^{N(r)}_{\pm}$ consists of all those functions $\beta : N(r)\to\mathbf{2}$ which have the correct parity in the sense that $\prod_{i\in N(r)} \beta(i) = (-1)^{b_r}$.

The hyperedges will also be of three kinds: first, $\{v_i^{-1}, v_i^{+1}\}$ for every $i\in[n]$, which is intended to correspond to a measurement of the $\mathbf{2}$-valued observable $A_i$; second, $\{ w_r^\beta\: :\: \beta\in \mathbf{2}^{N(r)}_\pm\}$ for every $r\in[m]$, which corresponds to the possible outcomes of a joint measurement of the $A_i$ with $i\in N(r)$; and third, the sets of the form
\beq
\label{sedge}
	\{\, v_i^\alpha \,\} \:\cup\{\, w_r^\beta \: :\: \beta(i) = -\alpha\, \}
\eeq
for every fixed $r\in[m]$, $i\in N(r)$ and $\alpha\in\mathbf{2}$. Intuitively, this is the set of outcomes of the measurement protocol which first measures $A_i$, and if the outcome is $-\alpha$, then also conducts a joint measurement of all other observables $A_j$ with $j\in N(r)$, resulting in a joint outcome $\beta\in \mathbf{2}^{N(r)}_{\pm}$ with $\beta(i) = -\alpha$. When considered as one overall measurement, this protocol can either terminate after the first because $A_i$ gives $\alpha$, corresponding to the vertex $v_i^\alpha$; or it can continue to the second step, which produces some $\beta$ with $\beta(i) = - \alpha$, corresponding to one of the vertices $w_r^\beta$. This ends the definition of the relevant hypergraph $H$.

We now show how quantum representations of this hypergraph correspond to quantum solutions of the linear system. First, given a quantum representation of the hypergraph, we obtain a quantum solution of the linear system by taking
\beq
\label{PtoA}
 	A_i \defin P_{v_i^{+1}} - P_{v_i^{-1}} = 2 P_{v_i^{+1}} - \mathbbm{1}
\eeq
for all $i\in[n]$. We need to show that this indeed results in a quantum solution by checking that the $P_{v_i^{+1}} - P_{v_i^{-1}}$ are unitary and satisfy the required relations. Unitarity is clear since $2 P_{v_i^{+1}} - 1$ is a symmetry, which thereby also shows that the relation $A_i^2 = 1$ is respected. For the commutativity $A_i A_j = A_j A_i$ with $i,j\in N(r)$, we use Fact~\ref{pumo} together with the partition of unity relation associated to the third kind of edge~\eqref{sedge}: this relation implies that 
\beq
\label{extra}
	P_{v_i^{+1}} - P_{v_i^{-1}} = \sum_{\beta \: :\: \beta(i) = +1} P_{w_r^\beta} - \sum_{\beta \: :\: \beta(i) = -1} P_{w_r^\beta},
\eeq
and similarly for $j$. Hence applying Fact~\ref{pumo} to the partition of unity relation associated to the hyperedge $\{w_r^\beta\}$ shows that both $A_i$ and $A_j$ are linear combinations of the same pairwise orthogonal projections, which implies commutativity. The expression~\eqref{extra} is also useful for checking that the relation $\prod_{i\in N(r)} \big(P_{v_i^{+1}} - P_{v_i^{-1}}\big) = (-1)^{b_r}\mathbbm{1}$ holds as well, which then follows from $\sum_\beta P_{w_r^\beta} = \mathbbm{1}$ upon using that every $\beta$ has parity $(-1)^{b_r}$.

In the other direction, we put
\[
	P_{v_i^\alpha} \defin \frac{\mathbbm{1} + \alpha A_i}{2},\qquad P_{w_r^\beta} \defin \prod_{i\in N(r)} \frac{\mathbbm{1} + \beta(i) A_i}{2},
\]
where we likewise need to check that the relations are preserved, which first requires showing that both right-hand sides are projections. This is clear in the first case and holds by the commutativity assumption $A_i A_j = A_j A_i$ for $i,j\in N(r)$ in the second case. We verify the required partition of unity relations. First, $P_{v_i^{-1}} + P_{v_i^{+1}} = 1$ holds trivially. Second, if we apply the definition of $P_{w_r^\beta}$ also for $\beta$ of the wrong parity, then
\[
	\sum_{\beta\in\mathbf{2}^{N(r)}} P_{w_r^\beta} = \prod_{i\in N(r)} \,\sum_{\beta\in\mathbf{2}} \frac{\mathbbm{1} + \beta A_i}{2} = \mathbbm{1}.
\]
Since $P_{w_r^\beta} = 0$ whenever $\beta$ has the wrong parity, we can ignore these terms and arrive at the desired equation. Third, the relation associated to~\eqref{sedge} takes a bit more work: the expression $P_{v_i^\alpha} + \sum_{\beta\: :\: \beta(i) = -\alpha} P_{w_r^\beta}$ evaluates to
\[
	\frac{\mathbbm{1}+\alpha A_i}{2} + \sum_{\beta\: :\: \beta(i) = -\alpha} \:\prod_{j\in N(r)} \frac{\mathbbm{1} + \beta(j) A_j}{2}  = \frac{\mathbbm{1}+\alpha A_i}{2} + \frac{\mathbbm{1}-\alpha A_i}{2} \sum_{\beta\: :\: \beta(i) = -\alpha} \:\prod_{j\in N(r),\: j\neq i} \frac{\mathbbm{1} +\beta(j) A_j}{2}. 
\]
Upon expanding the product, the sum over $\beta$ makes all terms cancel except for the constant one and the $\prod_{j\neq i} A_j$ one, which survives as well due to the parity constraint on $\beta$. Therefore we arrive at the expression
\[
	\frac{\mathbbm{1}+\alpha A_i}{2} + \frac{\mathbbm{1}-\alpha A_i}{2} \cdot \frac{\mathbbm{1} - \alpha (-1)^{b_r} \prod_{j\neq i} A_j}{2} = \frac{\mathbbm{1}+\alpha A_i}{2} + \frac{\mathbbm{1}-\alpha A_i}{2} \cdot \frac{\mathbbm{1} - \alpha A_i}{2} = \mathbbm{1},
\]
and we have verified that all three types of relations hold.

We finally show that the previous two constructions are inverses of each other. Starting with a quantum solution $(A_i)_{i\in[n]}$, it is immediate to show that computing the resulting projections and using~\eqref{PtoA} results in the original $A_i$'s. A similar statement holds for the $P_{v_i^\alpha}$ in the other direction, while a short computation is required to show the same for the $P_{w_r^\beta}$,
\[
	\prod_{i\in N(r)} \frac{1 + \beta(i)\cdot \left(P_{v_i^{+1}} - P_{v_i^{-1}}\right)}{2} = \prod_{i\in N(r)} \left(1 - P_{v_i^{-\beta(i)}}\right) = \prod_{i\in N(r)} \: \sum_{\beta'\: :\: \beta'(i) = \beta(i)} P_{w_r^{\beta'}} = P_{w_r^\beta},
\]
where the last step again uses Fact~\ref{pumo}.
\end{proof}

In~\cite[Section~8]{AFLS}, we had considered the decision problem \texttt{ALLOWS\_QUANTUM}, which asks: given a hypergraph $H=(V,E)$, does it have a quantum representation?\footnote{The formulation of~\cite{AFLS} asks for the existence of a quantum model on $H$, but this is clearly equivalent: the existence of a quantum model requires the existence of a quantum representation to begin with; conversely, one can use a quantum representation and an arbitrary state in its underlying Hilbert space to obtain a quantum model.} Our \emph{inverse sandwich conjecture} hypothesized that this problem is undecidable. Thanks to Slofstra's Theorem~\ref{solgroupundec}, we are now in a position to prove this:

\begin{cor}[Inverse sandwich conjecture]
\label{isc}
There is no algorithm to determine whether a given hypergraph has a quantum representation.
\end{cor}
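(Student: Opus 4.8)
The plan is to obtain this as a direct reduction from Slofstra's undecidability result, using the effective translation established in Lemma~\ref{algiso}. Almost all of the work has already been carried out: Theorem~\ref{solgroupundec} furnishes an undecidable source problem, namely whether a given linear system $Mx = b$ admits a quantum solution, while Lemma~\ref{algiso} provides an algorithm that turns each such system into a hypergraph whose quantum representations correspond bijectively to the quantum solutions of the system. What is left is to combine these two facts into a reduction.

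First I would fix the reduction map. Given a linear system $Mx = b$, the algorithm of Lemma~\ref{algiso} computes, uniformly and effectively, a hypergraph $H$ together with a bijection between quantum solutions of the system and quantum representations of $H$. Consequently, $H$ has a quantum representation if and only if the system has a quantum solution.

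The one point requiring care is the matching of the relevant nondegeneracy conditions on the two sides. A quantum representation is by definition supported on a Hilbert space with $\dim(\H) > 0$, so the bijection of Lemma~\ref{algiso} pairs quantum representations exactly with quantum solutions living on a nonzero Hilbert space. This is precisely the notion at issue in Theorem~\ref{solgroupundec}: on the zero-dimensional space one has $\mathbbm{1} = -\mathbbm{1}$, so every system formally acquires a degenerate solution there, and the content of Slofstra's theorem is the undecidability of whether a solution on a \emph{nonzero} Hilbert space exists, equivalently whether $J \neq 1$ in the solution group. Thus the decision problem ``does $H$ have a quantum representation'' agrees, instance by instance, with the undecidable problem ``does $Mx = b$ have a quantum solution''.

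Finally I would argue by contradiction. Were there an algorithm deciding whether an arbitrary hypergraph admits a quantum representation, precomposing it with the computable map $Mx = b \mapsto H$ of Lemma~\ref{algiso} would decide whether an arbitrary linear system has a quantum solution, contradicting Theorem~\ref{solgroupundec}. Hence no such algorithm exists. I expect no genuine obstacle here: the argument is a clean many-one reduction, and the sole subtlety is the dimension bookkeeping handled in the previous step.
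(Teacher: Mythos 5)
Your proposal is correct and follows exactly the paper's own argument: a many-one reduction that composes a hypothetical decision procedure for quantum representations with the computable map of Lemma~\ref{algiso}, contradicting Theorem~\ref{solgroupundec}. The extra paragraph on dimension bookkeeping is sound but not needed beyond what the paper's one-line proof already relies on.
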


\begin{proof}
If there was such an algorithm, then Lemma~\ref{algiso} would provide an algorithm to determine whether a given linear system has a quantum solution. This is in contradiction with Theorem~\ref{solgroupundec}.
\end{proof}

This also implies that there are hypergraphs that have quantum representations, but only in infinite Hilbert space dimension~\cite[Section~8]{AFLS}. Translating Slofstra's explicit example~\cite[Corollary~3.2]{tp} into a hypergraph using the prescription of Lemma~\ref{PtoA} will provide an explicit (but large) example.

\begin{rem}
\label{puralg}
In terms of fancier language, one can phrase Lemma~\ref{algiso} as saying that the maximal group C*-algebra~\cite{groupCstar} of the solution group associated to the linear system is, after taking the quotient by the relation $J=-1$, computably isomorphic to a free hypergraph C*-algebra.
At the purely algebraic level of \emph{$*$-algebras}, the analogous statement is still true with the same proof, but one needs to throw in the orthogonality relations of Fact~\ref{pumo} separately when defining the finitely presented $*$-algebra associated to a hypergraph. We refer to~\cite{freehyper} for more details on the structure of free hypergraph C*-algebras and the corresponding $*$-algebras.
\end{rem}

\begin{cor}
\label{notrfd}
There are infinitely many hypergraphs $H$ for which $C^*(H)$ is not residually finite-dimensional.
\end{cor}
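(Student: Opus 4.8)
The plan is to derive Corollary~\ref{notrfd} as a consequence of Corollary~\ref{isc} together with the bijective correspondence of Lemma~\ref{algiso}, by exhibiting the failure of residual finite-dimensionality as exactly the phenomenon of having quantum representations only in infinite dimension. Recall that a finitely presented C*-algebra is \emph{residually finite-dimensional} (RFD) precisely when its finite-dimensional representations separate points; equivalently, every nonzero element survives under some finite-dimensional representation. The key observation is that a hypergraph $H$ admits a quantum representation in a \emph{finite}-dimensional Hilbert space if and only if $C^*(H)$ has a nonzero finite-dimensional representation, which for these relation-presented algebras is closely tied to being RFD.

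First I would set up the dichotomy coming from decidability. By Corollary~\ref{isc} the class of hypergraphs admitting a quantum representation is not recursive. On the other hand, the class of hypergraphs admitting a \emph{finite-dimensional} quantum representation \emph{is} recursively enumerable: one can systematically search over finite dimensions and rational (or algebraic) approximations, or more cleanly, search over all finite-dimensional representations of the solution group sending $J\mapsto -\mathbbm{1}$, which is a semidecidable condition. Likewise, via the correspondence of Lemma~\ref{algiso} and Theorem~\ref{solgroupundec}, the hypergraphs with \emph{no} quantum representation at all form a recursively enumerable set as well (enumerate proofs that $J=1$ in the solution group, equivalently that the antecedents are contradictory). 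If every hypergraph admitting a quantum representation admitted a finite-dimensional one, then the property ``admits a quantum representation'' would be both r.e.\ and co-r.e., hence decidable, contradicting Corollary~\ref{isc}. Therefore there must exist hypergraphs $H$ that admit a quantum representation but only in infinite dimension.

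Next I would translate ``admits a quantum representation but none in finite dimension'' into ``$C^*(H)$ is not RFD.'' If $H$ has a quantum representation then $C^*(H)\neq 0$, so it has a nonzero element to detect. If $C^*(H)$ were RFD, then in particular it would have a nonzero finite-dimensional $*$-representation $\rho$, and the images $\rho(P_v)$ would furnish a quantum representation of $H$ in $\dim(\rho)<\infty$; this uses that the defining relations $P_v=P_v^*=P_v^2$ and $\sum_{v\in e}P_v=\mathbbm{1}$ are preserved by $\rho$, and that nonzero dimension is guaranteed because $\rho\neq 0$ forces $\rho(\mathbbm{1})\neq 0$. Hence for any $H$ admitting a quantum representation only in infinite dimension, $C^*(H)$ has no nonzero finite-dimensional representation at all, so it is certainly not RFD. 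To upgrade ``there exists one such $H$'' to ``infinitely many,'' I would argue that a single bad hypergraph $H_0$ generates infinitely many: for instance, one can take disjoint unions $H_0\sqcup K$ with an auxiliary hypergraph $K$ that always admits a finite-dimensional representation, and check that $C^*(H_0\sqcup K)$ inherits the failure of RFD from the $C^*(H_0)$ tensor factor; alternatively, decidability arguments give infinitely many bad instances directly, since finiteness of the exceptional set would again make the problem decidable.

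The main obstacle I expect is the precise bookkeeping around representations versus RFD. The step asserting ``RFD implies existence of a nonzero finite-dimensional representation'' is immediate, but the converse direction I actually need---that having \emph{no} finite-dimensional representation is what obstructs RFD---must be handled carefully, since RFD is a separation-of-points condition rather than merely existence of one finite-dimensional representation. The clean resolution is that for these hypergraph C*-algebras the relevant nonzero element to separate is $\mathbbm{1}$ itself (or any $P_v$ that is nonzero in some representation): an algebra in which $\mathbbm{1}\neq 0$ but which has no nonzero finite-dimensional representation cannot be RFD, because every finite-dimensional representation kills $\mathbbm{1}$. Thus the argument reduces cleanly to the recursion-theoretic dichotomy of the second paragraph, and the genuinely nontrivial content is entirely imported from Corollary~\ref{isc} and hence from Slofstra's Theorem~\ref{solgroupundec}.
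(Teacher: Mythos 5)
Your proof is correct, but it takes a genuinely different route from the paper's. The paper's proof is a one-step contradiction via the norm-computation algorithm of~\cite{compnorm}: if all (but finitely many) $C^*(H)$ were residually finite-dimensional, that algorithm would compute $\|1\|\in\{0,1\}$ and thereby decide whether $C^*(H)=0$, i.e.\ whether $H$ has a quantum representation, contradicting Corollary~\ref{isc}. You instead replace the appeal to~\cite{compnorm} by an elementary recursion-theoretic dichotomy: ``admits a finite-dimensional quantum representation'' is recursively enumerable (for each fixed dimension $d$ existence is a first-order sentence over the reals, decidable by Tarski--Seidenberg, and one enumerates $d$), while ``admits no quantum representation at all'' is recursively enumerable via Lemma~\ref{algiso} and the equivalence in Theorem~\ref{solgroupundec} (semi-decide $J=1$ by enumerating consequences of the relations in the finitely presented solution group); if quantum representability always reduced to finite-dimensional representability outside a finite exceptional set, the problem would be both r.e.\ and co-r.e., hence decidable. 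What your approach buys: it is more self-contained (no noncommutative Positivstellensatz / norm computability needed), and it proves a strictly stronger conclusion---infinitely many $H$ with $C^*(H)\neq 0$ admitting \emph{no} nonzero finite-dimensional representation whatsoever, which is a much stronger failure than non-RFD. What the paper's approach buys: brevity, and it realizes the strategy of~\cite{compnorm} exactly as advertised. Note also that both proofs handle ``infinitely many'' identically, by hard-coding finitely many exceptions.

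Two small repairs to your write-up. First, ``search over rational approximations'' does not by itself certify an exact finite-dimensional representation (the projection relations are rigid); the clean argument is the Tarski one above, or enumeration of algebraic points of the representation variety. Second, $C^*(H_0\sqcup K)$ is the unital \emph{free} product $C^*(H_0)*_{\C}C^*(K)$, not a tensor product; the fix is to argue at the level of representations (any finite-dimensional quantum representation of $H_0\sqcup K$ restricts to one of $H_0$, and an infinite-dimensional quantum representation of $H_0\sqcup K$ is obtained from representations $\pi$ of $H_0$ and $\sigma$ of $K$ by taking $P_v\otimes\mathbbm{1}$ and $\mathbbm{1}\otimes Q_w$ on the tensor product Hilbert space), or simply to use your decidability route, which is fully correct as stated.
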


\begin{proof}
If every $C^*(H)$ was residually finite-dimensional, then we could use the algorithm of~\cite{compnorm} to determine whether $\|1\|=0$ or $\|1\|=1$ in $C^*(H)$, which are the only two possibilities depending on whether $C^*(H) = 0$ (no quantum representation) or $C^*(H)\neq 0$ (there is a quantum representation). This contradicts Corollary~\ref{isc}. If there were only finitely many exceptions to this residual finite-dimensionality, then there would also have to exist an algorithm which simply treats these exceptional cases separately.
\end{proof}

As far as we know, this is the first time that the strategy of~\cite{compnorm} has been successfully employed to show that some finitely presented C*-algebras are not residually finite-dimensional. We do not know \emph{which ones} of these C*-algebras fail to be residually finite-dimensional.

\section{Consequences for quantum logic}
\label{mainthm}

The projection operators on a Hilbert space are in bijective correspondence with the closed subspaces. This lets us translate the observations of the previous section into statements about quantum logic.

Although we try to avoid too much jargon, it will be helpful to utilize the basic terminology of model theory~\cite{mtbook}. We work in the signature $(\lor,\perp,0,1)$, where $\perp$ is a binary relation; any orthomodular lattice can also be considered a structure of this signature. We follow~\cite{mtbook} in using the shorthand notation $\bar{P}\defin P_1,\ldots, P_n$ for a list of variables, using notation which suggests that we are still thinking in terms of projections.

It is a standard fact that forming a partition of unity~\eqref{pueq} at the level of projections is equivalent to the associated subspaces being pairwise orthogonal and spanning the entire space, which translates into the formula
\beq
\label{PUeq}
\PU{\bar{P}} = \PU{P_1,\dots,P_n} \defin \mybigand_{i\neq j} \left( \, P_i \perp P_j \, \right) \, \myand \, \left( P_1 \lor \ldots \lor P_n = 1 \right) ,
\eeq
where the notation `$OC$' reminds us of \emph{orthogonality} and \emph{completeness}. All formulas that we use are built out of formulas of this form. If we want to say that already a certain subset $\{P_i \: :\: i\in e\}$ of these projections indexed by $e\subseteq[n]$ forms a partition of unity, then we simply write $\PU{\bar{P}_e}$.

\begin{thm}
The theory of complex Hilbert lattices $\Cl(\H)$ in the signature $(\lor,\perp,0,1)$ is undecidable: there is no algorithm to decide whether for a given hypergraph $H=(V,E)$, the sentence
\beq
\label{horn}
\forall (P_v)_{v\in V} \, \left[ \Big(\mybigand_{e\in E} \PU{\bar{P}_e} \Big) \myimplies \big(0 = 1) \right]
\eeq
holds in every $\Cl(\H)$ or not.
\label{sandwich}
\end{thm}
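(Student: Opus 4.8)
The plan is to recognize that the formula~\eqref{horn} is simply a lattice-theoretic restatement of the non-existence of a quantum representation of $H$, so that an algorithm deciding its validity would decide \texttt{ALLOWS\_QUANTUM} and thereby contradict Corollary~\ref{isc}. The whole argument is thus a translation, and the only substantive point is to verify that the atomic formula $\PU{\bar{P}_e}$ captures the partition-of-unity relation~\eqref{pueq}.

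First I would pin down the meaning of $\PU{\bar{P}_e}$ in a given lattice $\Cl(\H)$. By its definition in~\eqref{PUeq}, it asserts that the projections $\{P_v : v\in e\}$ are pairwise orthogonal and that their join is $1$. I claim this holds exactly when these projections form a partition of unity in the sense of~\eqref{pueq}. One direction is Fact~\ref{pumo}: a partition of unity consists of mutually orthogonal projections, and since they sum to $\mathbbm{1}$ their closed span is all of $\H$, i.e.\ their join is $1$. Conversely, if the $P_v$ with $v\in e$ are pairwise orthogonal, then their sum is again a projection, namely the projection onto the closed span of their ranges; the condition $P_1\lor\cdots\lor P_n = 1$ says precisely that this span is all of $\H$, whence $\sum_{v\in e}P_v = \mathbbm{1}$. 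This is the standard fact recorded before the statement.

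It follows that an assignment $\{P_v\}_{v\in V}$ of projections in a Hilbert space $\H$ satisfies the antecedent $\mybigand_{e\in E}\PU{\bar{P}_e}$ if and only if the $P_v$ satisfy all the relations~\eqref{pueq}, i.e.\ form a quantum representation of $H$ apart from the dimension constraint $\dim(\H)>0$. On the other hand, the consequent $0=1$ holds in $\Cl(\H)$ precisely when $\H=\{0\}$, so in every nonzero $\Cl(\H)$ the implication~\eqref{horn} holds for all assignments if and only if the antecedent is unsatisfiable there, while in the zero-dimensional case it holds trivially. Combining these observations, the implication~\eqref{horn} holds in every $\Cl(\H)$ for all assignments if and only if there is no nonzero Hilbert space carrying projections that satisfy all the relations~\eqref{pueq}---that is, if and only if $H$ has no quantum representation.

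Finally I would conclude by contradiction: were there an algorithm to decide, for a given $H$, whether~\eqref{horn} is valid, then by the equivalence just established it would decide whether $H$ admits a quantum representation, contradicting Corollary~\ref{isc}. Since the argument is purely a reformulation, I do not anticipate a genuine obstacle; the one place requiring care is the equivalence between the lattice formula $\PU{\bar{P}_e}$ and the operator equation~\eqref{pueq}, which is supplied by Fact~\ref{pumo} together with the observation that a sum of pairwise orthogonal projections is the projection onto the closed span of their ranges.
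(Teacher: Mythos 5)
Your proposal is correct and follows the paper's own proof exactly: the theorem is a restatement of Corollary~\ref{isc}, obtained by identifying assignments satisfying $\mybigand_{e\in E}\PU{\bar{P}_e}$ in a nonzero $\Cl(\H)$ with quantum representations of $H$ (via the standard equivalence between partitions of unity and pairwise orthogonal, jointly spanning subspaces). Your added care in verifying that equivalence is the same standard fact the paper invokes just before the theorem, so there is nothing to add.
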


This clearly implies our \Cref{qlundec}.

\begin{proof}
This is now merely a restatement of Corollary~\ref{isc}: the tuples of projections in a Hilbert space $\H$ with $\dim(\H)>0$ such that $\mybigand_{e\in E} \PU{\bar{P}_e}$ holds are precisely the quantum representations of $H$, while $0=1$ is equivalent to $\dim(\H) = 0$.
\end{proof}

Finally, we show that the universal theory of complex Hilbert lattices does not really depend on the Hilbert space. To this end, note that also the lattice of projections of any von Neumann algebra is a structure in the signature\footnote{See~\cite{redei} for an introduction to von Neumann algebras from the perspective of quantum logic, including a treatment of their lattices of projections (Section 6.2).} $(\lor,\perp,0,1)$. For example if this von Neumann algebra is $\B(\H)$, the algebra of bounded operators on a Hilbert space $\H$, then this structure is exactly the Hilbert lattice $\Cl(\H)$. If the von Neumann algebra is just $\C^n$, then this structure is the Boolean algebra of subsets of $[n]$, where two subsets are considered orthogonal if and only if they are disjoint. In order to understand our conclusion, the reader only need to know that every $\B(\H)$ is a von Neumann algebra, and that every algebra homomorphism $\B(\H) \to \B(\H \otimes \H')$ given by $a \mapsto a \otimes 1$ is a normal\footnote{Where as usual in von Neumann algebra theory, \emph{normal} means continuous with respect to the ultraweak topologies.} injective $*$-homomorphism.

\begin{lem}
Let $\phi(\bar{P})$ be a quantifier-free formula in the signature $(\lor,\perp,0,1)$. 
\begin{enumerate}
\item\label{vNaembed} If $f:\mathcal{N}\to\mathcal{M}$ is a normal injective $*$-homomorphism between von Neumann algebras, then $\phi(f(P_1),\ldots,f(P_n))$ if and only if $\phi(P_1,\ldots,P_n)$.
\item\label{b} Let $\Phi$ be the universal sentence $\forall \bar{P}\; \phi(\bar{P})$.
Then $\mathcal{H}\vDash\Phi$ for all Hilbert spaces $\mathcal{H}$ if and only if $\mathcal{H}\vDash\Phi$ for separable infinite-dimensional $\mathcal{H}$.
\end{enumerate}
\label{allH}
\end{lem}

\begin{proof}
\begin{enumerate}
	\item Normality of $f$ enters because it guarantees that $f(P_1\lor P_2) = f(P_1)\lor f(P_2)$; one way to see this is to choose a faithful normal representation $\mathcal{M} \subseteq \B(\H)$ for some Hilbert space $\H$, which upon composing with $f$ also gives a faithful normal representation of $\mathcal{N}$ in $\mathcal{B}(\H)$. Then since a faithful normal representation induces a lattice homomorphism between the projection lattices~\cite[Proposition~6.3]{redei}, we can conclude the same for $f$ as well. We likewise have $P_1 \perp P_2 \: \Leftrightarrow \: f(P_1) \perp f(P_2)$, which is clear also merely by multiplicativity and injectivity of $f$. We trivially have $f(0) = 0$ and $f(1) = 1$. Therefore $f$ is an embedding of structures in the signature $(\lor,\perp,0,1)$. The claim is then standard, and can also be proven easily by induction on the complexity of $\phi$.
\item 
The nontrivial direction is this: if $\mathcal{H}\vDash\Phi$ for separable infinite-dimensional $\mathcal{H}$, then also $\mathcal{H}'\vDash\Phi$ for any other Hilbert space $\mathcal{H}'$.
\begin{enumerate}
\item [\underline{Case 1:}] $\H'$ is finite-dimensional. In this case, $\mathcal{H}'\otimes\mathcal{H}$ is isomorphic to $\mathcal{H}$, and therefore $\mathcal{H}'\otimes\mathcal{H}\vDash \Phi$. By assumption, we therefore know that $\phi(P_1\otimes\mathbbm{1},\ldots,P_n\otimes\mathbbm{1})$ for any tuple of projections $\bar{P}$ on $\mathcal{H}'$, and hence also $\phi(P_1,\ldots,P_n)$ by~\ref{vNaembed}.
\item [\underline{Case 2:}] $\mathcal{H}'$ is infinite-dimensional. In this case, any tuple of projections $\bar{P}$ generate a von Neumann subalgebra in $\mathcal{B}(\mathcal{H}')$ that we denote $\mathcal{N}$. By~\cite[Theorem~2.1]{sherman}, there is a direct sum decomposition $\mathcal{N} \cong \bigoplus_{i \in I} \mathcal{N}_i$ such that each $\mathcal{N}_i$ is separable, and can therefore be represented faithfully on $\mathcal{H}$.

	Now let $\bar{P}$ be a tuple of projections on $\H'$. Our goal is to show that $\phi(\bar{P})$ holds in $\mathcal{B}(\H')$, or equivalently in $\mathcal{N}$. With $\bar{P}^{(i)}$ for $i \in I$ denoting the tuple of projections projected to the direct summand $\mathcal{N}_i$, the assumption implies that $\phi(\bar{P}^{(i)})$ holds in $\mathcal{N}_i$ for every $i$. But since the truth of a quantifier-free formula can be determined componentwise in a direct sum---as is easily seen by induction on the complexity---it follows that $\phi(\bar{P})$ indeed holds in $\mathcal{N}$. \qedhere
\end{enumerate}
\end{enumerate}
\end{proof}

\begin{rem}
	Thus the universal (first-order) theory of the projection lattice of any separable infinite-dimensional Hilbert space like $\ell^2(\N)$ coincides with the (first-order) theory of all complex Hilbert lattices. \Cref{sandwich} thus implies that the theory of a separable infinite-dimensional Hilbert space is undecidable as well. Since it is also trivially complete, it follows that it is not even recursively axiomatizable: if one could enumerate the axioms recursively, then one would have an algorithm to deciding for any sentence $\Phi$ whether $\Phi$ or $\lnot\Phi$ holds by enumerating the axioms together with all of their consequences until $\Phi$ or $\lnot\Phi$ has been derived.

	One can also prove that the theory of all complex Hilbert lattices cannot be axiomatized recursively, as hypothesized by Svozil in~\cite[p.\,69]{rand}. With ``the theory'' referring to the first-order theory of complex Hilbert lattices, this is based on the following observations:
	\begin{itemize}
			\setlength\itemsep{0.6\baselineskip}
		\item There is a universal sentence $\Psi$ such that $\H \models \Psi$ for every finite-dimensional Hilbert space $\H$, but $\H \models \lnot\Psi$ for separable infinite-dimensional $\H$.

			Indeed the projection lattice of finite-dimensional $\H$ satisfies the modular law, while already Birkhoff and von Neumann showed that this fails in $\ell^2(\N)$~\cite[p.~832]{logicqm}. In follow-up work to this paper, it was also shown that there are $\Psi$ of the form~\eqref{horn} with this property~\cite[Theorem~3.5]{freehyper}. 
		\item Suppose that we fix such $\Psi$. Then a universal sentence $\Phi$ satisfies $\ell^2(\N) \models \Phi$ if and only if $\Psi \lor \Phi$ belongs to the theory.

			Indeed in one direction, $\ell^2(\N) \models \Phi$ implies $\H \models \Phi$ for all infinite-dimensional $\H$ by the arguments used for \Cref{allH}, while trivially $\H \models \Psi$ for finite-dimensional $\H$. Conversely, if $\Psi \lor \Phi$ belongs to the theory, then we must have $\ell^2(\N) \models \Phi$ since $\ell^2(\N) \not\models \Psi$.
		\item Similarly, a universal sentence $\Phi$ satisfies $\ell^2(\N) \models \lnot \Phi$ if and only if $\Psi \lor \lnot \Phi$ belongs to the theory.

			Indeed $\ell^2(\N) \models \lnot \Phi$ implies $\H \models \lnot \Phi$ for any infinite-dimensional $\H$, by proving the contrapositive as in \Cref{allH}, while trivially $\H \models \Psi$ for finite-dimensional $\H$. Conversely, if $\Psi \lor \lnot \Phi$ belongs to the theory, then again $\ell^2(\N) \models \lnot \Phi$ since $\ell^2(\N) \not\models \Psi$.

	\end{itemize}
	It follows that for every universal sentence $\Phi$, either $\Psi \lor \Phi$ or $\Psi \lor \lnot\Phi$ belongs to the theory, but not both. Now if one could enumerate the axioms recursively, then one would have an algorithm for deciding which of these two cases occurs for any given $\Phi$, again by generating all axioms together with all their consequences until either $\Psi \lor \Phi$ or $\Psi \lor \lnot\Phi$ has been derived. Therefore no recursive axiomatization of the theory of complex Hilbert lattices exists.

Svozil~\cite[p.\,69]{rand} also asks if it is possible to develop an axiomatization of Hilbert lattices in ``purely algebraic'' terms. If one interprets this as asking whether the Hilbert lattices are the class of models of some theory in first-order logic, then the answer is well-known to be negative: the L\"owenheim-Skolem theorem asserts that it is impossible to axiomatize \emph{any} uncountable structure in first-order logic. This holds irrespectively of whether one attempts to axiomatize Hilbert lattices in all Hilbert space dimensions or only in one particular dimension $\geq 2$. The most that one can hope for is categoricity in the relevant cardinality, meaning that every model of the same cardinality as the intended model is isomorphic to the intended models. But there will always be models in other cardinalities as well.
\end{rem}

\subsection*{Acknowledgements}

We thank John Harding, Christian Herrmann, Ravi Kunjwal, Anthony Leverrier, Mladen Pavi{\v{c}}i{\'c}, Stefan Schmidt, William Slofstra, Rob Spekkens, Karl Svozil, Andreas Thom and Moritz Weber for discussions. Special thanks go to Christian Herrmann for copious help with the literature and useful feedback, to William Slofstra for pointing out that Theorem~\ref{sandwich} actually follows from the arguments used in an earlier version of this paper to prove a weaker result, and to Andre Kornell for pointing out two gaps in our arguments in a previous version (which unfortunately is the published version).

Most of this paper was written while the author was with the Max Planck Institute for Mathematics in the Sciences in Leipzig, Germany.

\bibliographystyle{unsrt}
\bibliography{quantum_logic}

\begin{thebibliography}{10}

\bibitem{logicqm}
Garrett Birkhoff and John von Neumann.
\newblock The logic of quantum mechanics.
\newblock {\em Ann. of Math.}, 37(4):823--843, 1936.

\bibitem{handbook}
Kurt Engesser, Dov~M. Gabbay, and Daniel Lehmann, editors.
\newblock {\em Handbook of quantum logic and quantum structures---quantum
  logic}.
\newblock Elsevier, 2009.

\bibitem{orthomodular}
Gudrun Kalmbach.
\newblock {\em Orthomodular lattices}, volume~18 of {\em London Mathematical
  Society Monographs}.
\newblock Academic Press, Inc., London, 1983.

\bibitem{meashilb}
Gudrun Kalmbach.
\newblock {\em Measures and {H}ilbert lattices}.
\newblock World Scientific, 1986.

\bibitem{ql}
Karl Svozil.
\newblock {\em Quantum logic}.
\newblock Springer Series in Discrete Mathematics and Theoretical Computer
  Science. Springer, 1998.

\bibitem{piron}
Constantin Piron.
\newblock Axiomatique quantique.
\newblock {\em Helv. Phys. Acta}, 37:439--468, 1964.

\bibitem{pironfix}
Ichiro Amemiya and Huzihiro Araki.
\newblock A remark on {P}iron's paper.
\newblock {\em Publ. Res. Inst. Math. Sci. Ser. A}, 2:423--427, 1966/7.

\bibitem{wilbur}
W.~John Wilbur.
\newblock On characterizing the standard quantum logics.
\newblock {\em Trans. Amer. Math. Soc.}, 233:265--282, 1977.

\bibitem{soler}
Maria~Pia Sol{\`e}r.
\newblock Characterization of {H}ilbert spaces by orthomodular spaces.
\newblock {\em Comm. Alg.}, 23(1):219--243, 1995.

\bibitem{hilbgeom}
Isar Stubbe and Bart Van~Steirteghem.
\newblock Propositional systems, {H}ilbert lattices and generalized {H}ilbert
  spaces.
\newblock In {\em Handbook of quantum logic and quantum structures}, pages
  477--523. Elsevier Sci. B. V., Amsterdam, 2007.
\newblock \href{http://arxiv.org/abs/0710.2098}{arXiv:0710.2098}.

\bibitem{hleqs}
Norman~D. Megill and Mladen Pavi{\v{c}}i{\'c}.
\newblock Hilbert lattice equations.
\newblock {\em Ann. Henri Poincar{\'e}}, 10(7):1335--1358, 2010.

\bibitem{qlexpl}
Norman Megill.
\newblock Quantum logic explorer home page, 2014.
\newblock
  \href{http://us.metamath.org/qlegif/mmql.html}{us.metamath.org/qlegif/mmql.html}.

\bibitem{malcev}
A.~I. Mal'cev.
\newblock {\em Algebraic systems}.
\newblock Springer-Verlag, New York-Heidelberg, 1973.
\newblock Posthumous edition, edited by D. Smirnov and M. Ta\u{\i}clin,
  Translated from the Russian by B. D. Seckler and A. P. Doohovskoy, Die
  Grundlehren der mathematischen Wissenschaften, Band 192.

\bibitem{QSAT}
Sergey Bravyi.
\newblock Efficient algorithm for a quantum analogue of 2-{SAT}.
\newblock
  \href{https://arxiv.org/abs/quant-ph/0602108}{arXiv:quant-ph/0602108}.

\bibitem{spparable}
Yeong-Cherng Liang, Robert~W. Spekkens, and Howard~M. Wiseman.
\newblock Specker's parable of the over-protective seer: A road to
  contextuality, nonlocality and complementarity.
\newblock {\em Physics Reports}, 506(1--2):1--39, 2011.

\bibitem{tp}
William Slofstra.
\newblock Tsirelson's problem and an embedding theorem for groups arising from
  non-local games.
\newblock {\em J. Amer. Math. Soc.}, page to appear.
\newblock \href{http://arxiv.org/abs/1606.03140}{arXiv:1606.03140}.

\bibitem{solgroup}
Richard Cleve, Li~Liu, and William Slofstra.
\newblock Perfect commuting-operator strategies for linear system games.
\newblock {\em J. Math. Phys.}, 58(1):012202, 7, 2017.
\newblock \href{http://arxiv.org/abs/1606.02278}{arXiv:1606.02278}.

\bibitem{bcsg}
Richard Cleve and Rajat Mittal.
\newblock Characterization of binary constraint system games.
\newblock In {\em Automata, Languages, and Programming}, volume 8572 of {\em
  Lecture Notes in Computer Science}. Springer, 2014.
\newblock \href{http://arxiv.org/abs/1209.2729}{arXiv:1209.2729}.

\bibitem{AFLS}
Antonio Ac{\'i}n, Tobias Fritz, Anthony Leverrier, and Ana~Bel{\'e}n Sainz.
\newblock A combinatorial approach to nonlocality and contextuality.
\newblock {\em Comm. Math. Phys.}, 334(2):533--628, 2015.
\newblock \href{http://arxiv.org/abs/1212.4084}{arXiv:1212.4084}.

\bibitem{modunsolv}
Leonard Lipshitz.
\newblock The undecidability of the word problems for projective geometries and
  modular lattices.
\newblock {\em Trans. Amer. Math. Soc.}, 193:171--180, 1974.

\bibitem{orthodec}
M.~A. E.~H. Sherif.
\newblock Decision problem for orthomodular lattices.
\newblock {\em Algebra Universalis}, 37(1):70--76, 1997.

\bibitem{vna}
Christian Herrmann.
\newblock On the equational theory of projection lattices of finite von
  {N}eumann factors.
\newblock {\em J. Symb. Log.}, 75(3):1102--1110, 2010.

\bibitem{quantumsat}
Christian Herrmann and Martin Ziegler.
\newblock Computational complexity of quantum satisfiability.
\newblock {\em J.~ACM}, 63(2):19, 2016.

\bibitem{gensat}
Albert Atserias, Phokion~G. Kolaitis, and Simone Severini.
\newblock Generalized satisfiability problems via operator assignments, 2017.
\newblock \href{https://arxiv.org/abs/1704.01736}{arXiv:1704.01736}.

\bibitem{freehyper}
Tobias Fritz.
\newblock Curious properties of free hypergraph c*-algebras.
\newblock {\em J. Operator Theory}, page to appear.
\newblock \href{https://arxiv.org/abs/1808.09220}{arXiv:1808.09220}.

\bibitem{peres}
Asher Peres.
\newblock Incompatible results of quantum measurements.
\newblock {\em Physics Letters A}, 151(3--4):107--108, 1990.

\bibitem{mermin}
N.~David Mermin.
\newblock Simple unified form for the major no-hidden-variables theorems.
\newblock {\em Phys. Rev. Lett.}, 65:3373, 1990.

\bibitem{groupCstar}
Gert~K. Pedersen.
\newblock {\em {$C^{\ast} $}-algebras and their automorphism groups}, volume~14
  of {\em London Mathematical Society Monographs}.
\newblock Academic Press, 1979.

\bibitem{compnorm}
Tobias Fritz, Tim Netzer, and Andreas Thom.
\newblock Can you compute the operator norm?
\newblock {\em Proc. Amer. Math. Soc.}, 142:4265--4276, 2014.
\newblock \href{http://arxiv.org/abs/1207.0975}{arXiv:1207.0975}.

\bibitem{mtbook}
Wilfrid Hodges.
\newblock {\em Model Theory}, volume~42 of {\em Encyclopedia of Mathematics and
  its applications}.
\newblock Cambridge University Press, 1993.

\bibitem{redei}
Mikl{\'o}s R{\'e}dei.
\newblock {\em Quantum Logic in Algebraic Approach}.
\newblock Kluwer Academic Publishers, 1998.

\bibitem{sherman}
David Sherman.
\newblock On cardinal invariants and generators for von {N}eumann algebras.
\newblock {\em Canad. J. Math.}, 64(2):455--480, 2012.
\newblock \href{https://arxiv.org/abs/0908.4565}{arXiv:0908.4565}.

\bibitem{rand}
Karl Svozil.
\newblock {\em Randomness \& undecidability in physics}.
\newblock World Scientific Publishing Co., Inc., River Edge, NJ, 1993.

\end{thebibliography}

\bigskip

\end{document}